\date{\today}
\renewcommand{\span}{\mathop{\rm span}}
\newcommand{\be}{\begin{equation}}
\newcommand{\ee}{\end{equation}}
\newcommand{\bea}{\begin{eqnarray}}
\newcommand{\eea}{\end{eqnarray}}
\newcommand{\eq}[1]{(\ref{#1})}
\newtheorem{thm}{Theorem}[section]
\newtheorem{prop}[thm]{Proposition}
\newtheorem{lem}[thm]{Lemma}
\renewcommand{\epsilon}{\varepsilon}
 \def\idty{{\mathchoice {\mathrm{1\mskip-4mu l}} {\mathrm{1\mskip-4mu l}} %
{\mathrm{1\mskip-4.5mu l}} {\mathrm{1\mskip-5mu l}}}}
\begin{document}

\title[On the Dynamics of Lattice Systems]{On the dynamics of lattice systems\\ with unbounded on-site terms in the Hamiltonian}

\author[B. Nachtergaele]{Bruno Nachtergaele$^1$}
\thanks{B.\ N.\ was supported in part by the National Science Foundation under grant DMS-1009502}
\address{$^1$ Department of Mathematics\\ University of California, Davis\\ Davis, CA 95616, USA}
\email{bxn@math.ucdavis.edu}

\author[R. Sims]{Robert Sims$^2$}
\thanks{R.\ S.\ was supported in part by the National Science Foundation under grant DMS-1101345 and by the
Simons Foundation under grant $\#$301127}
\address{$^2$ Department of Mathematics\\
University of Arizona\\
Tucson, AZ 85721, USA}
\email{rsims@math.arizona.edu}

\date{}
\begin{abstract}
We supply the mathematical arguments required to complete the proofs of two previously published results: Lieb-Robinson bounds for the dynamics of quantum lattice systems with unbounded on-site terms in the Hamiltonian and the existence of the thermodynamic limit of the dynamics of such systems.
\end{abstract}
\maketitle

%
%
%

\section{Introduction}
In \cite{nach2009} we studied Lieb-Robinson bounds for lattice oscillator systems and applied those bounds to prove the existence of infinite system dynamics for a class of anharmonic lattices in \cite{nach2010}. This required a generalization of the Lieb-Robinson bounds proved for short-range quantum spin systems in \cite{LR1} and the extensions to more general interactions in \cite{HK, NS1, NOS1}. In particular, oscillator systems have an infinite-dimensional Hilbert space of states at each site and their Hamiltonians are unbounded self-adjoint operators, defined on a dense domain, not the entire Hilbert space. Unfortunately, the way this issue was addressed in \cite{nach2009} is inadequate, as was pointed out to us by Hendrik Grundling in a series of emails.

The purpose of this note is to remove the shortcomings of \cite{nach2009}. 
With the complete proofs provided here, all our results hold as stated in the original publications.

The points we need to address are of two types. The first type concerns issue with the calculus of functions with values in the bounded linear operators on a separable Hilbert space and the solution of evolution equations with unbounded generators defined on a dense domain. As Grundling pointed out to us, and as we explain in Section \ref{sec:calculus} below, the differentiability, continuity and measurability properties of the functions we encounter are not always self-evident and this makes the generalization of results in finite dimension sometimes less than straightforward. E.g., the norm of a bounded strongly continuous operator-valued function is not necessarily continuous. In general, such a function is measurable only if the underlying Hilbert space is separable. Since it does not appear to be widely known how to deal with such issues, we include the necessary arguments in Section \ref{sec:calculus}.

A second set of issues occurs in the specific application to the dynamics of lattice systems and to explain how to address those we need to recall the setup of \cite{nach2009} (see Section \ref{sec:lrb}). An important role is played by observables with finite support and Lieb-Robinson bounds allow one to bound the rate of growth of the support under the Heisenberg dynamics. It is necessary to separate the terms in the Hamiltonian that preserve the support of a given observable from terms that do not and to estimate the effect of the corresponding components on the dynamics accordingly. By paying proper attention to these issues we provide a correct proof of Theorem 2.1 in \cite{nach2010}. 

\section{Evolution equations in $\mathcal{B}(\mathcal{H})$}\label{sec:calculus}

In this section we review some basic properties of operator-valued functions and prove two preliminary results: 
Proposition~\ref{prop:sols} on the solution of the Schr\"odinger equation with time-dependent 
bounded Hamiltonians, and Lemma~\ref{lem:normbd} on estimating the growth in norm of solutions of a class of
Heisenberg-type equations. 

Let $\mathcal{H}$ be a complex, separable Hilbert space and let $\mathcal{B}( \mathcal{H})$ denote the bounded linear operators on 
$\mathcal{H}$.  A function $A: \mathbb{R} \to \mathcal{B}(\mathcal{H})$ is said to be {\it strongly continuous} (resp. {\it strongly differentiable}) if:
for all $\psi \in \mathcal{H}$, $A(t) \psi$ is continuous (resp. differentiable) in $t$ with respect to the norm topology on $\mathcal{H}$. 
By the Uniform Boundedness Principle, if $A$ is strongly continuous, then $A$ is locally bounded, i.e., if 
$I \subset \mathbb{R}$ is compact, then 
\begin{equation}
\sup_{t \in I} \| A(t) \| < \infty
\end{equation}
As the following example shows, the strong continuity of $t\mapsto A(t)$  does not imply that $t\mapsto \Vert A(t)\Vert$ is continuous.
 Consider a sequence of non-zero, orthogonal projections $P_n$ for which 
$\sum_{n=0}^\infty P_n=\idty$, as a series converging in the strong operator topology. 
For any smooth function $f:[0, \infty)\to [0,1]$, such that $f(x)=1$,  for $x\in [0,1]$, and $f(x)=0$ for $x\geq 2$, define $A(t)$ by
\begin{equation}
A(t) = \left\{ \begin{array}{ll} \idty - \sum _{n=0}^\infty f(nt) P_n, & \mbox{if } t>0, \\ 0 & \mbox{otherwise.} \end{array} \right.
\end{equation}
Then, $A(t)$ is strongly continuous, but $\Vert A(t)\Vert$ is not continuous at $t=0$. To see the strong continuity,
consider the estimate
\begin{equation}
\Vert (A(t)-A(t_0))\psi\Vert^2 =\sum_{n=0}^\infty (f(nt)-f(nt_0))^2 \Vert P_n \psi\Vert^2\leq \sum_{n=\lceil t^{-1}\rceil }^{[2t_0^{-1}]}
(f(nt)-f(nt_0))^2 \Vert P_n \psi\Vert^2. 
\end{equation}
Here we assumed that $t>t_0\geq0$. If $t_0>0$, we find
\begin{equation}
 \sum_{n=\lceil t^{-1}\rceil }^{[2t_0^{-1}]}
(f(nt)-f(nt_0))^2 \Vert P_n \psi\Vert^2\leq  \sum_{n=\lceil t^{-1}\rceil }^{[2t_0^{-1}]} (n | t-t_0| \Vert f^\prime\Vert_\infty)^2 \Vert P_n\psi\Vert^2
\leq 4 t_0^{-2} |t-t_0|^2 \Vert f^\prime\Vert^2_\infty \Vert \psi\Vert^2.
\end{equation}
This shows strong continuity on $(0,\infty)$. In the case $t_0=0$, we have
\begin{equation}
 \sum_{n=\lceil t^{-1}\rceil }^{[2t_0^{-1}]}
(f(nt)-f(nt_0))^2 \Vert P_n \psi\Vert^2\leq  \sum_{n=\lceil t^{-1}\rceil }^{\infty} \Vert P_n\psi\Vert^2.
\end{equation}
and strong continuity follows as the RHS is the tail of a convergent series. Since $\Vert A(0)\Vert =0$ and $\Vert A(t)\Vert =1$ for $t>0$, 
$\Vert A(t)\Vert$ is clearly not continuous at $0$.

Using the local boundedness  it is easy to show that the product of two strongly continuous mappings is jointly strongly 
continuous. In fact, let $A$ and $B$ be strongly continuous mappings. It suffices to observe that, for any $t_0, s_0 \in \mathbb{R}$, 
the norm of the quantity
\begin{equation}
(A(t)B(s) - A(t_0) B(s_0) ) \psi =  A(t)(B(s) - B(s_0) ) \psi + (A(t) - A(t_0)) B(s_0) \psi 
\end{equation} 
converges (as $(t,s) \to (t_0,s_0)$) to zero for every $\psi \in \mathcal{H}$. A similar argument shows that the
product of strongly differentiable mappings is strongly differentiable in each variable. 

In the following sections we will make use of various types of integrals of operator-valued functions. Therefore, 
we briefly discuss the measurability of operator-valued functions. Weak integrals are defined as 
follows. If $A: \mathbb{R} \to \mathcal{B}( \mathcal{H})$ is
weakly measurable, i.e. for all $\phi, \psi \in \mathcal{H}$, $t \mapsto \langle \phi, A(t) \psi \rangle$ is measurable, then
for any interval $I \subset \mathbb{R}$ the weak-integral of $A$ over $I$ is defined as the operator $B_I \in \mathcal{B}( \mathcal{H})$ given by the Lebesgue integral
\begin{equation}
\langle \phi, B_I \psi \rangle = \int_I  \langle \phi, A(t) \psi \rangle \, dt \, . 
\end{equation}
For strongly continuous functions, the same integrals can also be interpreted strongly. 
The fundamental theorem of calculus holds in the weak and the strong sense for weak, respectively, strong
integrals. It is useful to note that for a function $A: \mathbb{R} \to \mathcal{B}(\mathcal{H})$, weak measurability implies that 
the functions $t\mapsto \Vert A(t)\psi\Vert$, for any $\psi\in \mathcal{H}$, and $t\mapsto \Vert A(t)\Vert$, are measurable. This follows 
from the existence of a countable dense set, $\mathcal{S}_0$, in the unit sphere in $\mathcal{H}$ (it is here that we use that 
$\mathcal{H}$ is separable) and the fact that the supremum of a countable set of measurable functions is measurable 
(see, e.g.,  \cite[Proposition 2.7] {Foll}). Explicitly, if for all $\phi,\psi\in\mathcal{H}$, $t\mapsto \langle \phi, A(t)\psi\rangle$ is measurable, then so is $t\mapsto \vert\langle \phi, A(t)\psi\rangle\vert$, and therefore also the following two functions defined as
a supremum of measurable functions:
\begin{eqnarray}
\| A(t) \psi\| &=& \sup\{ \vert\langle\phi, A(t) \psi \rangle\vert\mid  \phi\in\mathcal{S}_0\}\\
\| A(t) \| &=& \sup\{ \| A(t) \psi \|\mid  \psi\in\mathcal{S}_0\}.
\end{eqnarray}
As a consequence, the bound
\begin{equation}
| \langle \phi, B_I \psi \rangle | \leq \int_I | \langle \phi, A(t) \psi \rangle | dt \leq \| \phi \| \int_I \| A(t) \psi \| dt 
\end{equation}
immediately yields
\begin{equation}\label{int1}
\left\| \left(\int_I A(t)dt \right) \psi \right\| \leq \int_I \| A(t) \psi\|  dt
\end{equation}
where we have used the more common notation $\int_I A(t)dt$ for $B_I$.
A similar argument shows that
\begin{equation}\label{int2}
\left\| \int_I A(t) dt \right\| \leq \int_I \| A(t) \|  dt.
\end{equation}
The integrals we estimated in \eq{int1} and \eq{int2} are defined as weak integrals.
If the integrals also exist in a stronger sense, clearly, the same inequalities hold. 
 
The Schr\"odinger dynamics generated by a bounded, time-dependent Hamiltonian is 
well understood and solutions are 
often expressed by a Dyson series. A standard result in this direction assumes that 
$H:\mathbb{R}\to\mathcal{B}(\mathcal{H})$ is strongly continuous and self-adjoint, i.e. $H(t)^*= H(t)$ for all $t \in \mathbb{R}$
(see, e.g.,Theorem X.69 of \cite{RS2}). Under this assumption,
for every initial condition $\psi_0\in\mathcal{H}$,  the time-dependent Schr\"odinger equation
\begin{equation} \label{tdseqn}
i\frac{d}{dt}\psi(t)=H(t)\psi(t), \quad \psi(0)=\psi_0,
\end{equation}
has a unique solution in the sense that there is a unique differentiable function $\psi(t)$ which satisfies (\ref{tdseqn}). This solution is easily seen to give rise to a unitary propagator $U(t,s)\in \mathcal{B}(\mathcal{H})$ such that 
\begin{equation}
\psi(t)=U(t,s)\psi(s),\quad t,s\in\mathbb{R}
\end{equation}
with $U(t,s)$ separately, strongly differentiable in both $t$ and $s$.
An explicit construction of this propagator is given by the Dyson series:
\begin{equation}\label{dyson}
U(t,s)\psi =  \psi + \sum_{n=1}^{\infty} (-i)^n \int_{s}^t \int_{s}^{t_1} \cdots \int_{s}^{t_{n-1}} H(t_1) \cdots H(t_n)  \psi \, dt_n \cdots dt_1\,
\end{equation}
for any $\psi \in \mathcal{H}$. It is easily seen that the propagator $U(t,s)$ itself satisfies the equation
\begin{equation}\label{se_propagator}
\frac{d}{dt}U(t,s) =-iH(t)U(t,s),
\end{equation}
which holds in the strong sense.

The additional observation we want to make here is that $U(t,s)$ is not only the unique strong solution of this equation. It is also a fact that
any bounded weak solution of (\ref{se_propagator}) necessarily coincides with $U(t,s)$. By {\em weak solution}, we mean that for all $\phi,\psi\in\mathcal{H}$,
\begin{equation}
 \frac{d}{dt}\langle\phi,U(t,s)\psi\rangle =-i\langle\phi, H(t)U(t,s)\psi\rangle.
\end{equation}
This also implies that if, for strongly continuous, self-adjoint, and bounded $H(t)$, the operators $U(t,s)$ satisfy
\begin{equation}\label{se_propagator2}
\frac{d}{dt}U(t,s) \psi=-iH(t)U(t,s)\psi,
\end{equation}
for $\psi$ in a dense subset of $\mathcal{H}$, then the equation is satisfied for all $\psi\in\mathcal{H}$, that $U(t,s)$ is separately, strongly differentiable in both $t$ and $s$, and that $U(t,s)$ is given by the Dyson series (\ref{dyson}).
This is the content of the following proposition.

\begin{prop} \label{prop:sols}
Let $A:  \mathbb{R} \to \mathcal{B}(\mathcal{H})$ be strongly continuous. Then the following statements hold:

(i) The equation 
\begin{equation}\label{basic_de}
\frac{d}{dt} V(t)= A(t) V(t), \quad V(t_0) = V_0 \in \mathcal{B}( \mathcal{H})
\end{equation}
has a unique strong solution $V : \mathbb{R} \to \mathcal{B}( \mathcal{H})$.

(ii) Any locally norm-bounded, weak solution coincides with the strong solution.

(iii) If $V: \mathbb{R} \to \mathcal{B}(\mathcal{H})$ is strongly continuous and satisfies
\begin{equation}\label{de_on_vectors}
\frac{d}{dt} V(t)\psi= A(t) V(t)\psi,  \quad V(t_0) \psi= V_0\psi
\end{equation}
for $\psi$ in a dense subset of $\mathcal{H}$, then $V(t)$ satisfies (\ref{de_on_vectors}) for all $\psi\in\mathcal{H}$. In other words, $V(t)$ is the strong solution.

(iv) If $V_0$ is invertible, the solution $V(t)$ is invertible for all $t\in\mathbb{R}$, and the inverse $V(t)^{-1}$ is strongly differentiable.

(v) If $A(t)=-iH(t)$, with $H(t)$ bounded and self-adjoint for all $t\in\mathbb{R}$, the solution $V(t)$ with initial condition 
$V_0=\idty$ is unitary for all $t\in\mathbb{R}$ and, using the uniqueness, it is easy to verify that $U(t,s)=V(t)V(s)^{-1}$ is a 
unitary cocycle also given by (\ref{dyson}).
\end{prop}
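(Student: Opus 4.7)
The plan is to build everything from part (i) and then derive parts (ii)--(v) by uniqueness arguments and direct computation. For (i), I construct the solution by the operator-valued Picard/Dyson iteration
\begin{equation*}
V(t) = V_0 + \sum_{n=1}^\infty \int_{t_0}^t \int_{t_0}^{t_1} \cdots \int_{t_0}^{t_{n-1}} A(t_1) A(t_2) \cdots A(t_n) V_0 \, dt_n \cdots dt_1,
\end{equation*}
where each iterated integral is interpreted in the strong sense using joint strong continuity of the integrand (the product lemma for strongly continuous functions noted earlier in this section). Local boundedness of $\|A(\cdot)\|$ on any compact $I \ni t_0$ yields a convergent norm majorant $\|V_0\|(K|t-t_0|)^n/n!$ with $K = \sup_{s\in I}\|A(s)\|$, so the series converges in operator norm uniformly on compacts and $V(t)\in\mathcal{B}(\mathcal{H})$. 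Strong continuity of $V$ and termwise differentiation then yield \eqref{basic_de} in the strong sense. For uniqueness, given two strong solutions $V_1,V_2$, the difference $W = V_1 - V_2$ satisfies the integral identity $W(t)\psi = \int_{t_0}^t A(s)W(s)\psi\,ds$, whence $\|W(t)\psi\| \leq \int_{t_0}^t \|A(s)\|\,\|W(s)\psi\|\,ds$, and Gr\"onwall gives $W\equiv 0$.

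For (ii), a locally norm-bounded weak solution $V$ is in particular weakly continuous (weak differentiability implies weak continuity), hence weakly measurable, so $\|V(\cdot)\|$ is Lebesgue measurable by the argument in Section \ref{sec:calculus}. Integration of the weak derivative yields $\langle\phi, V(t)\psi\rangle = \langle\phi, V_0\psi\rangle + \int_{t_0}^t \langle\phi, A(s)V(s)\psi\rangle\,ds$ for all $\phi,\psi$. Subtracting the strong solution $V_s$ of (i) and using the weak integral bound \eqref{int2}, the difference $W = V - V_s$ satisfies $\|W(t)\| \leq \int_{t_0}^t \|A(s)\|\,\|W(s)\|\,ds$ on compacts, and Gr\"onwall again forces $W\equiv 0$. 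For (iii), define $\tilde V(t)\psi = V_0\psi + \int_{t_0}^t A(s)V(s)\psi\,ds$, which exists as a strong integral for every $\psi$ because $s\mapsto A(s)V(s)\psi$ is strongly continuous. On the dense set the fundamental theorem of calculus gives $\tilde V(t)\psi = V(t)\psi$, and since both $\tilde V(t)$ and $V(t)$ are bounded operators they agree on all of $\mathcal{H}$; the integral representation then yields strong differentiability and the ODE on every $\psi$.

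For (iv), I solve the transposed equation $dX/dt = -X(t)A(t)$ with $X(t_0) = V_0^{-1}$, constructed by an analogous Dyson series and unique by a parallel Gr\"onwall argument. A direct strong computation gives $d(XV)/dt = 0$, hence $X(t)V(t) \equiv \idty$. For the reverse identity, set $Z(t) = V(t)X(t)$, which satisfies $dZ/dt = A(t)Z(t) - Z(t)A(t)$ with $Z(t_0) = \idty$; the constant function $\idty$ is another solution, and uniqueness via $\|Z(t) - \idty\| \leq 2\int_{t_0}^t\|A(s)\|\,\|Z(s)-\idty\|\,ds$ and Gr\"onwall forces $Z \equiv \idty$. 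Hence $V(t)$ is invertible with $V(t)^{-1} = V_0^{-1}X(t)$, strongly differentiable because $X$ is. For (v), with $A = -iH$ and $H$ bounded self-adjoint, the weak computation
\begin{equation*}
\frac{d}{dt}\langle V(t)\phi, V(t)\psi\rangle = \langle -iH(t)V(t)\phi, V(t)\psi\rangle + \langle V(t)\phi, -iH(t)V(t)\psi\rangle = 0
\end{equation*}
combined with $V(t_0) = \idty$ gives $V(t)^*V(t) = \idty$, so $V(t)$ is an isometry; (iv) then yields invertibility, which upgrades isometry to unitarity via $V^* = V^{-1}$. The cocycle $U(t,s) = V(t)V(s)^{-1}$ satisfies \eqref{se_propagator} with $U(s,s) = \idty$, so uniqueness in (i) identifies it with the Dyson series \eqref{dyson}.

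The main obstacle I anticipate is the right-invertibility step in (iv): $d(VX)/dt$ does not vanish directly, forcing me to introduce the commutator equation $dZ/dt = A(t)Z(t) - Z(t)A(t)$ and a separate Gr\"onwall-based uniqueness statement for it. A secondary subtlety arises in (v), where the adjoint of a strongly differentiable operator-valued function need not itself be strongly differentiable, which is precisely why I handle $V^*V$ entirely in the weak sense rather than attempt a strong product rule.
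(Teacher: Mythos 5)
Your proposal is correct overall and follows the same Dyson-series--plus--Gr\"onwall architecture as the paper for parts (i), (ii), (iv), and (v). Part (iii), however, takes a genuinely different and, I think, cleaner route. The paper fixes $\phi\in\mathcal{H}$, approximates $\psi$ by a sequence $\psi_n$ in the dense set, and invokes a theorem on uniform convergence of derivatives (Rudin, Thm.\ 7.17) to pass the differential equation from $\mathcal{D}$ to all of $\mathcal{H}$. You instead define the bounded operator $\tilde V(t)\psi = V_0\psi + \int_{t_0}^t A(s)V(s)\psi\,ds$ directly (the integrand is strongly continuous by the product observation earlier in the section), note that the fundamental theorem of calculus forces $\tilde V(t)=V(t)$ on the dense set, hence everywhere by boundedness, and then read strong differentiability off the integral representation. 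This avoids the sequence/uniform-convergence machinery entirely; both arguments are rigorous, but yours is more self-contained.

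Two small remarks. In (iv) the final line should read $V(t)^{-1}=X(t)$, not $V(t)^{-1}=V_0^{-1}X(t)$: since $X(t_0)=V_0^{-1}$, the two identities $X(t)V(t)=\idty$ and $V(t)X(t)=\idty$ already make $X(t)$ itself the two-sided inverse. Your decomposition of (iv) into the easy direction $d(XV)/dt=0$ and the harder commutator-equation step for $VX=\idty$ is in fact slightly more complete than the paper's proof, which only displays the $VW$ identity and leaves the other side implicit. Finally, your treatment of (v) -- doing the isometry computation weakly precisely because $t\mapsto V(t)^*$ need not be strongly differentiable, and then upgrading isometry to unitarity via invertibility from (iv) -- supplies the detail the paper elides with ``follows in a straightforward manner,'' and is the right way to fill that gap.
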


\begin{proof}
(i)
The solution can be constructed using the Dyson series (\ref{dyson}) in the same way as is done for self-adjoint $A(t)$. 
In fact, for any $t \in \mathbb{R}$ and $\psi \in \mathcal{H}$, consider
\begin{equation}\label{dyson2}
V(t)\psi =  V_0\psi + \sum_{n=1}^{\infty}  \int_{t_0}^t \int_{t_0}^{t_1} \cdots \int_{t_0}^{t_{n-1}} A(t_1) \cdots A(t_n)  V_0\psi \, dt_n \cdots dt_1\,
\end{equation}
The integrals appearing as terms in the series above are well-defined due to the strong continuity of 
$A$; in particular, for any $n \geq 1$, products of the form $A(t_1) \cdots A(t_n)$
are jointly, strongly continuous in the variables $t_1, \cdots, t_n$. Thus, the integrands are 
measurable. Moreover, local boundedness further implies that the series is absolutely convergent, and hence, the 
Dyson series is a strong solution of (\ref{basic_de}). For $t<t_0$, the expression (\ref{dyson2}) may also be written as
\begin{equation}\label{dyson3}
V(t)\psi =  V_0\psi + \sum_{n=1}^{\infty} (-1)^n \int_{t}^{t_0} \int_{t_1}^{t_0} \cdots \int_{t_{n-1}}^{t_0} A(t_1) \cdots A(t_n)  V_0\psi \, dt_n \cdots dt_1\,
\end{equation}

Uniqueness of the strong solution can be proved using Gronwall's Lemma. 
For example, suppose $V_1$ and $V_2$ are two strong solutions. Then, for all $\psi\in\mathcal{H}$ and any $t \in \mathbb{R}$, we have
\begin{eqnarray}
\Vert (V_1(t) - V_2(t)) \psi \Vert & = & \left\Vert \int_{t_0}^{t} \frac{d}{ds}   (V_1(s) - V_2(s)) \psi \, ds \right\Vert  \nonumber \\ 
& = &  \left\Vert \int_{t_0}^{t} A(s)(V_1(t) - V_2(t)) \psi\, ds\right\Vert  \nonumber \\
& \leq & M \int_{t_-}^{t_+} \Vert (V_1(s) - V_2(s)) \psi\Vert \, ds \nonumber
\end{eqnarray}
where we have set $t_+ = \max\{ t, t_0\}$ and $t_- = \min \{t,t_0 \}$.
Gronwall's Lemma then implies that  $(V_1(t) - V_2(t)) \psi =0$. As $\psi$ is arbitrary, this proves the uniqueness of the strong solution.

(ii) The uniqueness of bounded weak solutions can be proved in the same way.
Suppose that $V_1(t)$ and $V_2(t)$ are two weak solutions of (\ref{basic_de}). This means that    
for all $\phi, \psi \in \mathcal{H}$ and any $t \in \mathbb{R}$,
\begin{equation}\label{basic_de2}
\frac{d}{dt} \langle \phi, V_i(t)\psi\rangle = \langle \phi, A(t) V_i(t) \psi \rangle, \quad \mbox{with } \langle\phi,V_i(t_0)\psi\rangle = \langle\phi,V_0\psi\rangle, \quad  \mbox{for } i=1,2.
\end{equation}
Therefore,
\begin{eqnarray}
\left| \langle \phi, \left( V_1(t) - V_2(t) \right) \psi \rangle \right| & = & \left| \int_{t_0}^{t} \frac{d}{ds} \langle \phi, \left( V_1(s) - V_2(s) \right) \psi \rangle \, ds \right|  \nonumber \\ & = &  \left| \int_{t_0}^{t} \langle \phi, A(s)( V_1(s) - V_2(s))\psi \rangle \, ds \right|  \nonumber \\
& \leq & \| \phi \| \| \psi \| M \int_{t_-}^{t_+} \| V_1(s) - V_2(s) \| \, ds 
\end{eqnarray}
where we have, again, set $t_+ = \max \{ t, t_0 \}$ and $t_- = \min \{ t,t_0\}$.
By taking the supremum over all normalized $\phi, \psi \in \mathcal{H}$, this
shows that
\begin{equation}
\| V_1(t) - V_2(t) \| \leq M \int_{t_-}^{t_+} \| V_1(s) - V_2(s) \| \, ds 
\end{equation}
which, upon application of Gronwall's Lemma, implies that $V_1(t) = V_2(t)$ for all $t \in \mathbb{R}$. Since any strong solution is also a weak solution, the unique weak and unique strong solution must coincide.

(iii) Under the assumptions,  there is a dense subset $\mathcal{D}\subset\mathcal{H}$, such that for all $\psi\in\mathcal{D}$ and $\phi\in\mathcal{H}$, we have
\begin{equation}\label{dense_strong}
\frac{d}{dt} \langle\phi,V(t)\psi\rangle= \langle\phi,A(t) V(t)\psi\rangle.
\end{equation}
It is clear that $\Vert V(t)\Vert$ is locally bounded. In fact, the usual proof of Gronwall's Lemma yields an explicit bound in 
terms of $\Vert V_0\Vert$ and the local norm bound on $A$. For $\psi\in\mathcal{H}$, let $\{ \psi_n \}$ be a sequence in $\mathcal{D}$ converging to $\psi$, and consider the sequence of functions
$f_n$ defined by
\begin{equation}
f_n(t)= \langle\phi,V(t)\psi_n\rangle.
\end{equation}
Due to (\ref{dense_strong}), we have $f_n^\prime(t) =  \langle\phi,A(t) V(t)\psi_n\rangle$. Since $A(t)V(t)$ is bounded,
it is easy to see that the sequence $\{ f_n^\prime \}$ converges to $g(t)= \langle\phi,A(t)V(t)\psi\rangle$, and the convergence is uniform on compact sets. Therefore the
conditions of \cite[Theorem 7.17]{Rudin_principles} are satisfied, implying that $f_n(t)$ converges to $f(t)= \langle\phi,V(t)\psi\rangle$
and $f^\prime(t)=g(t)$. This proves that
\begin{equation}
\frac{d}{dt} V(t)\psi= A(t) V(t)\psi\, ,
\end{equation}
for all $\psi\in\mathcal{H}$, and therefore that $V(t)$ is again the unique strong solution of (\ref{basic_de}).

(iv) Consider the solution $W(t)$ of
\begin{equation}
\frac{d}{dt} W(t)= -W(t)A(t), \quad W(t_0) = V_0^{-1},
\end{equation}
which is given by
\begin{equation}\label{dyson_inverse}
W(t)\psi =  V_0^{-1}\psi + \sum_{n=1}^{\infty}  (-1)^n\int_{t_0}^t \int_{t_0}^{t_1} \cdots \int_{t_0}^{t_{n-1}} V_0^{-1} A(t_n) \cdots A(t_1)  
\psi \, dt_n \cdots dt_1\,.
\end{equation}
It is then straightforward to check that $Y(t)=V(t)W(t)$, with $V$ the solution of (\ref{basic_de}), satisfies
\begin{equation}
\frac{d}{dt} Y(t)= A(t)Y(t)-Y(t)A(t), \quad Y(t_0) = \idty,
\end{equation}
which has the unique, constant solution $Y(t) =\idty$. Therefore, we have $W(t)=V(t)^{-1}$.

(v) This follows in a straightforward manner from the arguments given above.
\end{proof}

The above results using the Dyson series also enable a simple norm bound for solutions of certain differential equations.
This enters our proof of the Lieb-Robinson bound, and we state it here as a basic lemma.

\begin{lem} \label{lem:normbd}  Let $A,B: \mathbb{R} \to \mathcal{B}(\mathcal{H})$, 
be strongly continuous, with $A$ self-adjoint, i.e. $A(t)^*=A(t)$, for all $t\in \mathbb{R}$.
Then, for any $t_0 \in \mathbb{R}$, there exists a unique strong solution of the initial value problem 
\begin{equation} \label{gende}
f'(t) = i[A(t), f(t)] + B(t) \quad \mbox{with} \quad f(t_0) = f_0 \in \mathcal{B}(\mathcal{H}).
\end{equation}
This solution $f(t)$ satisfies the estimate
\begin{equation} \label{normbd0}
\| f(t) \| \leq \| f(t_0) \| + \int_{t_-}^{t_+} \| B(s) \| \, ds 
\end{equation}
for $t \in \mathbb{R}$; here we have set $t_+ = \max \{ t, t_0 \}$ and $t_- = \min \{ t,t_0 \}$.
Moreover, any bounded weak solution coincides with the strong solution and, therefore, satisfies the norm bound (\ref{normbd0}).
\end{lem}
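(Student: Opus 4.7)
The strategy I would follow is to reduce \eq{gende} to a trivial integration by conjugating with the unitary propagator generated by $-iA(t)$. Applying Proposition~\ref{prop:sols}(v) to the self-adjoint Hamiltonian $H(t) = -A(t)$ produces a unique strongly differentiable unitary $W:\mathbb{R}\to \B(\H)$ with $W'(t) = iA(t)W(t)$ and $W(t_0) = \idty$. Part~(iv) of the same proposition ensures that $W^{-1} = W^*$ is strongly differentiable, and strong differentiation of $W^*W = \idty$ then gives $(W^*)'(t) = -iW(t)^*A(t)$.

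Because $W$, $W^*$, and $B$ are strongly continuous, the map $s \mapsto W(s)^*B(s)W(s)$ is strongly continuous (using the product remark at the start of Section~\ref{sec:calculus}), so I can define the strongly differentiable
\begin{equation}
g(t) := f_0 + \int_{t_0}^{t} W(s)^*B(s)W(s)\, ds,
\end{equation}
with $g'(t) = W(t)^*B(t)W(t)$ and $g(t_0) = f_0$. Setting $f(t) := W(t)g(t)W(t)^*$ and applying the product rule for strongly differentiable operator-valued functions yields
\begin{equation}
f'(t) = iA(t)W(t)g(t)W(t)^* + W(t)g'(t)W(t)^* - iW(t)g(t)W(t)^*A(t) = i[A(t),f(t)] + B(t),
\end{equation}
so $f$ is a strong solution of \eq{gende}. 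Rewriting
\begin{equation}
f(t) = W(t)f_0W(t)^* + \int_{t_0}^t W(t)W(s)^*B(s)W(s)W(t)^*\, ds
\end{equation}
and using the unitarity of $W(t)$ and $W(s)$ together with \eq{int2} will then deliver the norm bound \eq{normbd0}.

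For uniqueness I would consider two locally norm-bounded weak solutions $f_1, f_2$ and form $h := f_1 - f_2$, which is itself a locally norm-bounded weak solution of $h'(t) = i[A(t),h(t)]$ with $h(t_0) = 0$. Integrating the weak identity $\frac{d}{dt}\langle\phi,h(t)\psi\rangle = i\langle\phi,[A(t),h(t)]\psi\rangle$ for fixed unit vectors $\phi,\psi$ and estimating via $|\langle\phi,[A(s),h(s)]\psi\rangle| \leq 2\|A(s)\|\,\|h(s)\|$ yields
\begin{equation}
|\langle\phi,h(t)\psi\rangle| \leq 2\int_{t_-}^{t_+} \|A(s)\|\,\|h(s)\|\, ds.
\end{equation}
Taking the supremum over a countable dense set of normalized vectors in $\H$, which is legitimate thanks to the measurability of $s\mapsto\|h(s)\|$ already established in Section~\ref{sec:calculus}, produces $\|h(t)\| \leq 2\int_{t_-}^{t_+}\|A(s)\|\,\|h(s)\|\, ds$, and Gronwall's Lemma then forces $h \equiv 0$. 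Hence every locally norm-bounded weak solution coincides with the strong solution constructed above and in particular inherits \eq{normbd0}. I expect the main technical hurdle to be this last step: we must convert the pointwise-in-$\phi,\psi$ weak estimate into an operator-norm estimate, which is exactly where the separability-based measurability of $\|h(s)\|$ developed earlier becomes essential.
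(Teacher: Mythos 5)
Your construction is essentially the paper's own proof: the paper likewise builds the unitary propagator $U(t,t_0)$ for the generator $iA(t)$ (via the Dyson series, which is what Proposition~\ref{prop:sols}(v) encodes), writes the solution in Duhamel form $f(t)=U(t,t_0)\bigl(f_0+\int_{t_0}^t U(s,t_0)^*B(s)U(s,t_0)\,ds\bigr)U(t,t_0)^*$, reads off the norm bound from unitarity and \eq{int2}, and settles uniqueness of bounded weak solutions by the same Gronwall argument (relying on the measurability of $s\mapsto\|h(s)\|$ established in Section~\ref{sec:calculus}). The proposal is correct and follows the same route, merely making the Gronwall step for weak uniqueness more explicit than the paper's pointer back to Proposition~\ref{prop:sols}.
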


\begin{proof} 
With $t_0 \in \mathbb{R}$ fixed, the Dyson series given by 
\begin{equation}
U(t, t_0)  = \idty + \sum_{n=1}^{\infty} i^n \int_{t_0}^t \int_{t_0}^{t_1} \cdots \int_{t_0}^{t_{n-1}} A(t_1) \cdots A(t_n)  \, dt_n \cdots dt_1
\end{equation}
is a well-defined, strongly differentiable family of unitaries satisfying the time-dependent Schr\"odinger equation corresponding to $-A$, i.e. for any $\psi \in \mathcal{H}$,
\begin{equation}
i\frac{d}{dt}U(t, t_0) \psi=-A(t)U(t, t_0) \psi, \quad \mbox{with } U(t_0, t_0) \psi = \psi. 
\end{equation}
One readily checks that $U(t,t_0)^* = U(t_0,t)$. Thus it is also strongly differentiable and 
satisfies
\begin{equation}
i\frac{d}{dt}U(t,t_0)^* \psi=U(t, t_0)^*A(t) \psi, \quad \mbox{with } U(t_0, t_0)^* \psi = \psi. 
\end{equation}
As a consequence, for any $g_0 \in \mathcal{B}(\mathcal{H})$, the mapping $g : \mathbb{R} \to \mathcal{B}( \mathcal{H})$ 
given by 
\begin{equation} \label{stsol}
g(t) = U(t, t_0) g_0 U(t,t_0)^*
\end{equation}
is the unique, strong solution of the initial value problem 
\begin{equation} \label{npde}
g'(t) = i [ A(t), g(t) ] \quad \mbox{with} \quad g(t_0) = g_0 \, .
\end{equation}
Here the uniqueness statement is proven by arguments using Gronwall's Lemma as is done
in Proposition~\ref{prop:sols}.

We now claim that $f : \mathbb{R} \to \mathcal{B}( \mathcal{H})$ given by 
\begin{equation} \label{deff}
f(t)  = U(t, t_0) \left( f_0 + \int_{t_0}^{t} U(s, t_0)^* B(s) U(s, t_0) \, ds \right) U(t, t_0)^*
\end{equation} 
is a strong solution of (\ref{gende}). As a product of strongly differentiable mappings, it is clear that this $f$ is
strongly differentiable. A short calculation shows that this $f$ satisfies (\ref{gende}), and  moreover, 
the bound (\ref{normbd0}) readily follows from (\ref{deff}) and unitarity of $U$. 
Again, as discussed in the proof of Proposition~\ref{prop:sols}, an application of Gronwall's Lemma proves uniqueness (in both the
strong and weak sense), and this completes the proof.
\end{proof}

%
%
%

\section{A proof of the Lieb-Robinson bound}\label{sec:lrb}

The models we consider are defined over a countable set $\Gamma$ equipped with a metric $d$. In the event that
the cardinality of $\Gamma$ is infinite, we will assume that there is a non-increasing function $F: [0, \infty) \to (0, \infty)$ 
for which:

i) $F$ is uniformly integrable, i.e.
\begin{equation} \label{Fint}
\| F \| = \sup_{x \in \Gamma} \sum_{y \in \Gamma} F(d(x,y)) < \infty,
\end{equation}
and

ii) $F$ satisfies the convolution condition
\begin{equation} \label{Fconv}
C = \sup_{x,y \in \Gamma} \sum_{z \in \Gamma} \frac{F(d(x,z))F(d(z,y))}{F(d(x,y))} < \infty
\end{equation}

A quantum system over $\Gamma$ is now defined as follows. To each site $x \in \Gamma$, we associate a separable, complex Hilbert
space $\mathcal{H}_x$. By $\mathcal{B}(\mathcal{H}_x)$ we will denote the algebra of all bounded linear operators over $\mathcal{H}_x$.  
For any finite set $\Lambda \subset \Gamma$, the Hilbert space of states and algebra of local observables over $\Lambda$ will be
denoted by
\begin{equation}
\mathcal{H}_{\Lambda} = \bigotimes_{x \in \Lambda} \mathcal{H}_x \quad \mbox{and} \quad \mathcal{A}_{\Lambda} = 
\mathcal{B}( \mathcal{H}_\Lambda) 
\end{equation}
For any two finite sets $\Lambda_0 \subset \Lambda \subset \Gamma$, $\mathcal{A}_{\Lambda_0}$ can be naturally identified with
the subset of $\mathcal{A}_{\Lambda}$ consisting of $\tilde{A} = A \otimes \idty_{\Lambda \setminus \Lambda_0} \in \mathcal{A}_{\Lambda}$,
for all $A \in \mathcal{A}_{\Lambda_0}$.
The algebra of {\em local observables} is given by the inductive limit
\begin{equation}
\mathcal{A}_{\Gamma}^{\rm loc} = \bigcup_{\Lambda \subset \Gamma} \mathcal{A}_{\Lambda},
\end{equation}
where the union taken over all {\em finite} subsets of $\Gamma$. The completion of $\mathcal{A}_{\Gamma}^{\rm loc}$ with respect to the operator norm, which we denote by $\mathcal{A}_{\Gamma}$,
is a $C^*$-algebra, and it will be called the algebra of all {\em quasi-local observables}. 

The models we will be considering are defined by families of Hamiltonians comprised of two types of terms: strictly local terms and bounded interactions. For each $x \in \Gamma$, there is a self-adjoint operator $H_x$ with dense domain $\mathcal{D}_x \subset\mathcal{H}_x$. 
The bounded interactions are described by a map $\Phi$ from the set of finite subsets of $\Gamma$ to $\mathcal{A}_{\Gamma}^{\rm loc}$ 
with the property that: for each $X \subset \Gamma$ finite, $\Phi(X)^* = \Phi(X) \in \mathcal{A}_{X}$. Then, for each finite $\Lambda \subset \Gamma$, the Hamiltonian for the system in $\Lambda$ is given by
\begin{equation} \label{fvham}
H_{\Lambda} = \sum_{x \in \Lambda}H_x + \sum_{X \subset \Lambda} \Phi(X),
\end{equation}
which is well-defined and essentially self-adjoint on the dense domain (see, e.g., \cite[Theorem VIII.33]{RS1})
\begin{equation} \label{fvdom}
\mathcal{D}_{\Lambda} = \span \{\bigotimes_{x\in\Lambda} \psi_x \mid \psi_x\in \mathcal{D}_x, \mbox{ for all } x\in \Lambda\}.
\end{equation}
 Using the spectral theorem, one can define the Heisenberg dynamics, $\tau_t^{\Lambda}$, generated
 by this self-adjoint operator, which is the one parameter group of automorphisms of $\mathcal{A}_{\Lambda}$ defined by
 \begin{equation} \label{dynamics}
 \tau_t^{\Lambda}(A) = e^{i t H_{\Lambda}} A e^{-itH_{\Lambda}} \quad \mbox{for any} \quad A \in \mathcal{A}_{\Lambda}.
 \end{equation}

By Stone's theorem, see e.g. Section VIII.4 of \cite{RS1} or Theorem 7.3.7 of \cite{Weid}, the unitaries $t\mapsto U_t^{\Lambda}= e^{-itH_{\Lambda}}$ are strongly continuous, leave the domain of $H_{\Lambda}$ invariant, and satisfy 
 \begin{equation} \label{stone}
 \frac{d}{dt} U_t^{\Lambda} \psi = -iH_{\Lambda} U_t^{\Lambda} \psi = -i U_t^{\Lambda} H_{\Lambda} \psi \quad \mbox{for all } \quad \psi \in \mathcal{D}_{\Lambda}.
 \end{equation}
We conclude that for any  $A \in \mathcal{A}_{\Lambda}$, the
 time evolution of $A$, defines a strongly continuous function $t\mapsto A(t) = \tau_t^{\Lambda}(A)$, in the sense of Section \ref{sec:calculus}.
 
Lieb-Robinson bounds provide an estimate of the rate at which the support of an observable grows as it evolves under the dynamics
\eq{dynamics}. We will prove a Lieb-Robinson bound for a class of sufficiently short- range interactions defined as follows. 
Let $\Gamma$ and $F$ be taken as above.  The space of interactions $B_F( \Gamma)$ consists of those $\Phi$ for which
\begin{equation} \label{intbd}
\| \Phi \| = \sup_{x,y \in \Gamma} \frac{1}{F(d(x,y))} \sum_{\stackrel{X \subset \Gamma:}{x,y \in X}} \| \Phi(X) \| < \infty
\end{equation} 
For any finite $X \subset \Lambda\subset\Gamma$, we define 
\begin{equation}
S_{\Lambda}(X) = \{ Z \subset \Lambda : Z \cap X \neq \emptyset \mbox{ and } Z \cap (\Lambda \setminus X) \neq \emptyset \},
\end{equation}
the {\em surface} of $X$ in $\Lambda$ and set $S(X) = S_{\Gamma}(X)$. The {\em $\Phi$-boundary} of a set $X$ is then given by
\begin{equation}
\partial_{\Phi} X = \{ x \in X : \exists Z \in S(X) \mbox{ with } x \in Z \mbox{ and } \Phi(Z) \neq 0 \}.
\end{equation} 
For generic $\Phi$, $\partial_{\Phi} X=X$, but if $\Phi$ is of finite range, $\partial_{\Phi} X$ is a strict subset of $X$ when $X$ is sufficiently large. 
We can now state the bound.
 
 \begin{thm} \label{thm:lrb} Let $\Gamma$ and $F$ be as indicated above. Fix a collection of local Hamiltonians $\{ H_x \}_{x \in \Gamma}$ and an interaction 
 $\Phi \in B_F( \Gamma)$. Let $X,Y \subset \Gamma$ be finite disjoint sets. For any finite $\Lambda \subset \Gamma$ with 
 $X \cup Y \subset \Lambda$ and any $A \in \mathcal{A}_X$ and $B \in \mathcal{A}_Y$, the bound
 \begin{equation} \label{lrbest}
 \left\| \left[ \tau_t^{\Lambda}(A), B \right] \right\| \leq \frac{2 \| A \| \| B \|}{C} (e^{2 \| \Phi \| C |t|} - 1) D(X,Y)
 \end{equation}
 holds for all $t \in \mathbb{R}$, where the quantity $D(X,Y)$ is given by
 \begin{equation} \label{lrbmin}
 D(X,Y) = \min \left\{ \sum_{x \in X} \sum_{y \in \partial_{\Phi} Y} F(d(x,y)), \sum_{x \in \partial_{\Phi} X} \sum_{y \in Y} F(d(x,y))\right\}
 \end{equation}
 \end{thm}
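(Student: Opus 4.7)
The plan is to set up an interaction picture that isolates the bounded boundary interactions of $X$ as a time-dependent perturbation, reducing the problem to a form amenable to Lemma~\ref{lem:normbd}, and then iterate the resulting integral inequality in the standard Lieb--Robinson fashion.

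First I would split the Hamiltonian as $H_\Lambda = \tilde K_X + R_X$, with
\begin{equation*}
\tilde K_X \;=\; \sum_{x \in \Lambda} H_x \;+\; \sum_{\substack{Z \subset \Lambda\\ Z \not\in S_\Lambda(X)}} \Phi(Z), \qquad R_X \;=\; \sum_{Z \in S_\Lambda(X)}\Phi(Z).
\end{equation*}
Since $\tilde K_X$ is a sum of two commuting operators acting separately on $\mathcal{H}_X$ and $\mathcal{H}_{\Lambda \setminus X}$, each essentially self-adjoint on $\mathcal{D}_\Lambda$, Stone's theorem gives a unitary dynamics $\tilde\sigma^X_t$ that factorizes as a tensor product and preserves $\mathcal{A}_X$ and $\mathcal{A}_{\Lambda \setminus X}$ separately, while $R_X$ is bounded and self-adjoint. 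Define the interaction-picture propagator $V(t) := e^{-it\tilde K_X}e^{itH_\Lambda}$. Differentiating formally on $\mathcal{D}_\Lambda$ yields the strong equation $V'(t)\psi = i\tilde R_X(t)V(t)\psi$ on that dense set, with $\tilde R_X(t) := \tilde\sigma_{-t}^X(R_X)$ bounded, self-adjoint, and strongly continuous; Proposition~\ref{prop:sols}(iii) then extends this to a strong equation on $\mathcal{H}_\Lambda$. Setting $\alpha_t(A) := V(t)AV(t)^*$, the product rule gives $\alpha_t(A)' = i[\tilde R_X(t), \alpha_t(A)]$ of the form in Lemma~\ref{lem:normbd}, and a direct calculation shows $\tau_t^\Lambda(A) = \tilde\sigma_t^X(\alpha_t(A))$.

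For $A \in \mathcal{A}_X$ and $B \in \mathcal{A}_Y$ with $X \cap Y = \emptyset$, set $\beta(t) := \tilde\sigma_{-t}^X(B) \in \mathcal{A}_{\Lambda \setminus X}$. The automorphism property gives $\|[\tau_t^\Lambda(A), B]\| = \|[\alpha_t(A), \beta(t)]\|$. Integrating the bounded ODE for $\alpha$ produces the Duhamel form $\alpha_t(A) = A + i\int_0^t [\tilde R_X(s), \alpha_s(A)]\,ds$; since $A$ and $\beta(t)$ act on disjoint tensor factors one has $[A, \beta(t)] = 0$, and the identity $V(s)^*[\tilde R_X(s), \alpha_s(A)]V(s) = [\tau_{-s}^\Lambda(R_X), A]$ together with the unitarity of $V(s)$ and the automorphism property of $\tau^\Lambda$ give $\|[\tilde R_X(s), \alpha_s(A)]\| = \|[R_X, \tau_s^\Lambda(A)]\|$. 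Summing over the components of $R_X$ I arrive at the fundamental integral inequality
\begin{equation*}
\|[\tau_t^\Lambda(A), B]\| \;\leq\; 2\|B\|\int_0^t \sum_{Z \in S_\Lambda(X)}\|[\tau_s^\Lambda(A), \Phi(Z)]\|\,ds,
\end{equation*}
in which every $Z$ appearing in the sum contains a site of $\partial_\Phi X$ by the very definition of $\partial_\Phi X$. To close the estimate I iterate in the usual way: rewrite each inner commutator via the automorphism identity and apply the same type of bound, generating nested chains of $n$ bounded interactions $\Phi(Z_1), \ldots, \Phi(Z_n)$ with consecutive overlap whose $F$-weights telescope through the convolution condition \eq{Fconv} (picking up a factor $C$ at each step) and the interaction bound \eq{intbd}. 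Summing the resulting power series gives $\tfrac{2\|A\|\|B\|}{C}(e^{2\|\Phi\|C|t|} - 1)$ times the $F$-weighted sum from $\partial_\Phi X$ to $Y$; the minimum in $D(X,Y)$ is obtained by rerunning the same argument with the roles of $A$ and $B$ exchanged, using $\|[\tau_t^\Lambda(A), B]\| = \|[A, \tau_{-t}^\Lambda(B)]\|$ to put the boundary on $Y$ instead.

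The main obstacle is the very first step: rigorously establishing the bounded operator-valued ODE for $V(t)$. Both $e^{-it\tilde K_X}$ and $e^{itH_\Lambda}$ are generated by unbounded operators only essentially self-adjoint on $\mathcal{D}_\Lambda$, so one has to verify that the formal computation of $V'(t)$ makes sense on that dense set (both exponentials must preserve enough of the domain for the manipulation to go through) before Proposition~\ref{prop:sols}(iii) can upgrade the resulting dense-domain identity to a strong equation on all of $\mathcal{H}_\Lambda$. This is precisely the gap the present note is designed to fill; once it is in place, the Duhamel expansion and the subsequent Lieb--Robinson iteration are routine.
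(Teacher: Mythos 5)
Your set-up of the interaction picture is sound and closely parallels the paper's: you split $H_\Lambda = \tilde K_X + R_X$, establish the ODE for $V(t)$ on the dense domain $\mathcal{D}_\Lambda$ via Stone's theorem, and upgrade it to a strong equation by Proposition~\ref{prop:sols}(iii). (Your $\tilde K_X$ absorbs the interactions $\Phi(Z)$ with $Z \subset \Lambda \setminus X$ whereas the paper's $H_0$ does not, but that is a harmless variant: both keep $\mathcal{A}_X$ invariant under the free evolution.) The genuine problem is in the integral inequality you extract, and it makes the whole iteration collapse.

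You write the Duhamel identity for $\alpha_t(A)$ alone and then commute with $\beta(t)$, bounding
\[
\bigl\| \bigl[ [\tilde R_X(s), \alpha_s(A)], \beta(t) \bigr] \bigr\| \le 2\|\beta(t)\| \, \| [\tilde R_X(s), \alpha_s(A)] \|,
\]
which yields your ``fundamental'' inequality
$\| [\tau_t^\Lambda(A), B]\| \le 2\|B\| \int_0^t \sum_{Z \in S_\Lambda(X)} \| [\tau_s^\Lambda(A), \Phi(Z)] \| \, ds$.
This step throws away the support localization of $B$: the inner commutator $[\tau_s^\Lambda(A), \Phi(Z)]$ involves two observables whose supports are both adjacent to $X$ (indeed $Z \cap X \neq \emptyset$ for every $Z \in S_\Lambda(X)$), and $Y$ no longer appears. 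Iterating cannot recover it. Whichever way you ``rewrite the inner commutator via the automorphism identity,'' the nontrivial zeroth-order term is $\|[A, \tilde\sigma^X_{-s}(\Phi(Z))]\|$ or $\|[\Phi(Z), \tilde\sigma^Z_{s}(A)]\|$, which is of order $2\|A\| \|\Phi(Z)\|$ and carries no factor coupling the chain to $Y$. The bound you get after the first iteration is of order $\|A\|\|B\|\,|t|\,\|\Phi\| \| F\| |\partial_\Phi X|$ with no $F$-decay in $d(X,Y)$, so it is not a Lieb--Robinson estimate. Concretely, the factor you want in \eqref{lrbest}, $\sum_{x \in \partial_\Phi X} \sum_{y \in Y} F(d(x,y))$, requires a nested chain $Z_1 \in S(X)$, $Z_2 \in S(Z_1), \ldots$, whose \emph{last} link intersects $Y$; your inequality never produces the indicator $\delta_Y(Z_n)$ because $B$ has been removed.

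The paper avoids this by studying the commutator function directly, $f(t) = [\tau_t^{\rm int}(A), B]$, computing
\[
f'(t) = i\bigl[ \tau_t^{\rm int}(\tilde H_{\rm int}(t)), f(t) \bigr] + i\bigl[ \tau_t^{\rm int}(A), [B, \tau_t^{\rm int}(\tilde H_{\rm int}(t))] \bigr]
\]
via the Jacobi identity, and then invoking Lemma~\ref{lem:normbd}. The crucial feature is that the inhomogeneity $B(t) = i[\tau_t^{\rm int}(A), [B, \tau_t^{\rm int}(\tilde H_{\rm int}(t))]]$ keeps $B$ inside the inner commutator and pulls out $\|A\|$, not $\|B\|$, giving the iterandum
$\| [\tau_t(A), B] \| \le 2\|A\|\|B\|\,\delta_Y(X) + 2\|A\| \sum_{Z \in S_\Lambda(X)} \int_{t_-}^{t_+} \| [\tau_s(\Phi(Z)), B]\|\, ds$.
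Here $B$ is the fixed anchor while the evolving observable passes to $\Phi(Z_1)$, then $\Phi(Z_2)$, etc., generating a chain from $X$ that must reach $Y$ through the $\delta_Y$ factor. To repair your argument you would need to apply Lemma~\ref{lem:normbd} to the commutator $f(t)$ itself (not merely to $\alpha_t(A)$) and to perform the Jacobi splitting there; and you would also need the analogue of the paper's Step 3 — the operator-norm bound on the linear map $G_t(A) = [\tau_t^{\rm int}(A), B]$ combined with the fact that $\tau_t^{(0)}$ (or your $\tilde\sigma^X_t$) preserves $\mathcal{A}_X$ — in order to pass from the interaction-picture commutator to $[\tau_t^\Lambda(A), B]$ without distorting the constant.
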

 Note that if $X \cap Y \neq \emptyset$, one always has $ \left\| \left[ \tau_t^{\Lambda}(A), B \right] \right\| \leq 2 \| A \| \| B \|$. If $\Phi$ is 
 exponentially decaying in the sense that there exists $a>0$ such that $\Phi\in B(\Gamma, F_a)$, with $F_a(r)=e^{-ar}F(r)$, $F(r)$ as above, 
 then
 \begin{eqnarray}
 D(X,Y)
 &\leq& \min \left\{ \sum_{x \in X} \sum_{y \in \partial_{\Phi} Y} F(d(x,y)), \sum_{x \in \partial_{\Phi} X} \sum_{y \in Y} F(d(x,y))\right\} e^{-ad(X,Y)}
 \nonumber\\
 &\leq& \min \{\vert \partial_{\Phi} X\vert,\vert \partial_{\Phi} Y\vert\} \Vert F\Vert e^{-ad(X,Y)},
 \end{eqnarray}
 and the upper bound \eq{lrbest} can be replaced by one of the exponential form found in \cite{LR1}.
 
 \begin{proof}
 We prove Theorem~\ref{thm:lrb} in four steps. First, we define an interaction-picture dynamics; this is (\ref{intdyn}) below.
 For this dynamics, one can check strong differentiability; this is step 1. Next, we show that Lemma~\ref{lem:normbd} applies
 and prove a basic bound, see (\ref{fbd}), for this dynamics. In the third step, we argue that an analogous bound also holds for the
 full dynamics under consideration, this is (\ref{iterandum}). Finally, the desired bound, i.e. (\ref{lrbest}) above, follows from iteration of (\ref{iterandum}).  
 
 {\it Step 1:} 
 Fix any finite subsets $X \subset \Lambda$ of $\Gamma$, and consider the Hamiltonian
 \begin{equation} \label{backham}
 H_0 = \sum_{x \in \Lambda}H_x + \sum_{Z \subset X} \Phi(Z) 
 \end{equation}
 which contains all strictly local Hamiltonians $H_x$ with $x \in \Lambda$, but only those interaction 
 terms with support strictly contained within $X$. It is clear that $H_0$ is self-adjoint on 
 the same dense domain $\mathcal{D}_{\Lambda}$ as the full Hamiltonian $H_{\Lambda}$, see
(\ref{fvham}) and (\ref{fvdom}). To ease notation in what follows, we will drop the dependence of $H_{\Lambda}$ on $\Lambda$ and just write $H$.
 
The following defines a two-parameter family of unitaries on $\mathcal{H}_{\Lambda}$
\begin{equation}
 W(t,s)  = e^{it H_0} e^{-i(t-s)H} e^{-isH_0} \quad \mbox{for any } s,t \in \mathbb{R}.
 \end{equation}
 One readily checks that $W(s,s) = \idty$ for any $s \in \mathbb{R}$, and moreover, 
 \begin{equation}
 W(t,s)^* = W(s,t) = W(t,s)^{-1} \quad \mbox{for any } s,t \in \mathbb{R}.
 \end{equation}
 As the product of strongly continuous, bounded functions, $W(t,s)$ is 
 jointly strongly continuous in $s$ and $t$. The interaction-picture dynamics is defined as
  \begin{equation} \label{intdyn}
 \tau_t^{\rm int}(A) = W(0,t)AW(t,0) = e^{itH} e^{-itH_0}Ae^{itH_0}e^{-itH} \quad \mbox{for any } A \in \mathcal{A}_{\Lambda}
 \end{equation}
We will now show that $t\mapsto \tau^{\rm int}_t(A)$ is strongly differentiable.
 
As we have noted, both $H$ and $H_0$ are well-defined self-adjoint operators with the same dense domain $\mathcal{D}_\Lambda \subset \mathcal{H}_{\Lambda}$.
Stone's theorem, as applied to $H_0$, provides an analogue of (\ref{stone}) for the strongly continuous unitaries 
$U_t^0 = e^{-itH_0}$ which is valid for all $\psi \in \mathcal{D}_{\Lambda}$. A calculation shows that for any $\psi \in \mathcal{D}_{\Lambda}$
 \begin{equation}
 \frac{d}{dt}W(t,s) \psi = -i e^{itH_0}(H-H_0)e^{-i(t-s)H} e^{-isH_0} \psi = -i H_{\rm int}(t) W(t,s) \psi,
 \end{equation}
where we have written
\begin{equation} \label{inthamW}
H_{\rm int}(t) = e^{itH_0}(H-H_0) e^{-itH_0} = \sum_{\stackrel{Z \subset \Lambda:}{Z \cap (\Lambda \setminus X) \neq \emptyset}} e^{itH_0} \Phi(Z) e^{-itH_0}.
\end{equation}
Since the operator $H_{\rm int}(t)$ is strongly continuous and self-adjoint, Proposition~\ref{prop:sols} (iii) implies that 
$W(t,s)$ is strongly differentiable in $t$. In fact, 
we conclude that $W(t,s)$ is separately strongly differentiable in $t$ and $s$ with strong derivatives given by
\begin{equation} \label{Wder}
 \frac{d}{dt}W(t,s) = -i H_{\rm int}(t) W(t,s) \quad \mbox{and} \quad  \frac{d}{ds}W(t,s) = i W(t,s) H_{\rm int}(s) \, .
\end{equation}
As a consequence, the interaction dynamics is strongly differentiable with derivative
 \begin{equation}
 \frac{d}{dt} \tau_t^{\rm int}(A) = i \tau_t^{\rm int} \left( \left[ H_{\rm int}(t), A \right] \right).
 \end{equation}
 Finally, we observe that for any $A \in \mathcal{A}_X$, with $X$ as in the definition of $H_0$, we have that
 \begin{equation}
  \left[ H_{\rm int}(t), A \right] =  \left[ \tilde{H}_{\rm int}(t), A \right] \quad \mbox{where} \quad \tilde{H}_{\rm int}(t) = \sum_{Z \in S_{\Lambda}(X)} e^{itH_0} \Phi(Z) e^{-itH_0}.
 \end{equation} 

{\it Step 2:} Fix $A \in \mathcal{A}_X$, $B \in \mathcal{A}_Y$, and $\Lambda$ with $X \cup Y \subset \Lambda$. In this step, the sets $X$ and $Y$ need not be disjoint. Consider the function
 \begin{equation} \label{deffapp}
  f(t) = \left[ \tau_t^{\rm int}(A), B \right]
 \end{equation}
 with $\tau_t^{\rm int} $ defined with respect to $X \subset \Lambda$ as in Step 1 above.
 $f$ is strongly differentiable and the strong derivative of $f$ satisfies
 \begin{eqnarray} \label{fde}
 f'(t) & = & i \left[ \tau_t^{\rm int} \left( [ \tilde{H}_{\rm int}(t), A ] \right), B \right]  \nonumber \\
 & = & i \left[ \tau_t^{\rm int} \left( \tilde{H}_{\rm int}(t) \right), f(t) \right] + i \left[ \tau_t^{\rm int}(A), \left[ B, \tau_t^{\rm int} \left( \tilde{H}_{\rm int}(t) \right)  \right] \right]  
 \end{eqnarray}
 where, for the last equality, we used the Jacobi identity. 
 
 We will now apply Lemma~\ref{lem:normbd} with $t_0 = 0$ and the following definitions of $A(t)$ and $B(t)$:
 \begin{equation}
A(t) = \tau_t^{\rm int} \left( \tilde{H}_{\rm int}(t) \right)  = \sum_{Z \in S_{\Lambda}(X)} \tau_t( \Phi(Z) )
 \end{equation}
 \begin{equation}
 B(t) =  i \left[ \tau_t^{\rm int}(A), \left[ B, \tau_t^{\rm int} \left( \tilde{H}_{\rm int}(t) \right)  \right] \right] 
 \end{equation}
 $A(t)$ and $B(t)$ are strongly continuous and $A(t)$ is self-adjoint. Therefore, Lemma~\ref{lem:normbd} applies and
hence $f$  satisfies the estimate
 \begin{equation} \label{fbd}
 \| f(t) \| \leq \| f(0) \| + 2 \| A \| \sum_{Z \in S_{\Lambda}(X)} \int_{t_-}^{t_+} \| [ \tau_s( \Phi(Z)), B ] \| \, ds 
 \end{equation}
 with $t_- = \min\{ t,0\}$ and $t_+ = \max\{t,0\}$. We will also use the trivial estimate
\begin{equation} \label{fat0}
\Vert f(0)\Vert =\Vert [A,B]\Vert \leq 2\Vert A\Vert \Vert B\Vert \delta_Y(X)
\end{equation}
where, for any $Y \subset \Gamma$, the function $\delta_Y$ is defined by
\begin{equation}
\delta_Y(X) = \left\{\begin{array}{ll} 1 & \mbox{if } X\cap Y\neq\emptyset \\ 0 & \mbox{otherwise.} \end{array} \right.
\end{equation}

{\it Step 3:}
We now consider $A \in \mathcal{A}_X$, $B \in \mathcal{A}_Y$ and $\Lambda$ with $X \cup Y \subset \Lambda$ as in the statement of the
theorem. Introduce the interaction-picture dynamics with respect to $X \subset \Lambda$ as in Step 1.
Note that $\tau_t = \tau^{\rm int}_t \circ \tau^{(0)}_t$ where we have set  $\tau_t^{(0)}(A) = e^{itH_0}Ae^{-itH_0}$. 
Using  (\ref{fbd}) and (\ref{fat0}), we conclude that the linear mapping $G_t:\mathcal{A}_X\to
\mathcal{A}_\Lambda$ with $G_t(A)=f(t)$ has a norm bound of the form
 \begin{equation} \label{Gbd}
 \| G_t  \| \leq 2 \Vert B\| \delta_Y(X)+ 2 \sum_{Z \in S_{\Lambda}(X)}  \int_{t_-}^{t_+} \| [ \tau_s( \Phi(Z)), B ] \| \, ds .
 \end{equation}
Since $\tau^{(0)}_t(A)\in\mathcal{A}_X$ for all $t \in \mathbb{R}$, the norm of $G_t (\tau^{(0)}_t(A))=
[\tau^{\rm int}_t\circ \tau^{(0)}_t (A),B]=[\tau_t(A),B]$ can be bounded as follows
\begin{equation} \label{iterandum}
\Vert [\tau_t(A),B]\Vert \leq 2 \Vert A\Vert \Vert B\Vert \delta_Y(X)
+2\Vert A\Vert \sum_{Z\in S_{\Lambda}(X)} \int_{t_-}^{t_+} \Vert [\tau_s(\Phi(Z)),B]\Vert \, ds.
\end{equation}

{\it Step 4:} Iteration of (\ref{iterandum}) yields
\begin{equation}
\| [\tau_t(A),B] \| \leq 2 \| A \| \| B \| \sum_{n=1}^{\infty} \frac{(2|t|)^n}{n!} a_n 
\end{equation}
where for $n \geq 1$,
\begin{equation}
a_n = \sum_{Z_1 \in S_{\Lambda}(X)} \sum_{Z_2 \in S_{\Lambda}(Z_1)} \cdots \sum_{Z_n \in S_{\Lambda}(Z_{n-1})} \delta_Y(Z_n) \prod_{j=1}^n \| \Phi(Z_j) \|,
\end{equation}
and the term corresponding to $n=0$ vanishes since $X$ and $Y$ are disjoint. Note that for $\Phi \in B( \Gamma, F)$, the following bounds
hold:
\begin{equation}
a_1 \leq \sum_{y \in Y} \sum_{\stackrel{Z \in S(X):}{y \in Z}} \| \Phi(Z) \| \leq \| \Phi \| \sum_{y \in Y} \sum_{x \in \partial_{\Phi} X} F(d(x,y))
\end{equation}
\begin{eqnarray}
a_2 & \leq & \sum_{Z_1 \in S(X)} \| \Phi(Z_1) \| \sum_{y \in Y} \sum_{\stackrel{Z_2 \in S(Z_1):}{y \in Z_2}} \| \Phi(Z_2) \|    \nonumber \\
& \leq & \| \Phi \|  \sum_{Z_1 \in S(X)} \| \Phi(Z_1) \| \sum_{y \in Y} \sum_{z_1 \in \partial_{\Phi}Z_1} F(d(z_1, y)) \nonumber \\
& \leq & \| \Phi \| \sum_{y \in Y} \sum_{z_1 \in \Gamma} F(d(z_1,y)) \sum_{\stackrel{Z_1 \in S(X):}{z_1 \in Z_1}} \| \Phi(Z_1) \| \nonumber \\
& \leq & \| \Phi \|^2 \sum_{y \in Y} \sum_{x \in \partial_{\Phi}X} \sum_{z_1 \in \Gamma} F(d(x, z_1)) F(d(z_1,y)) \nonumber \\
& \leq & C \| \Phi \|^2 \sum_{y \in Y} \sum_{x \in \partial_{\Phi}X} F(d(x,y)),
\end{eqnarray}
and more generally,
\begin{equation}
a_n \leq \| \Phi \|^n C^{n-1} \sum_{y \in Y} \sum_{x \in \partial_{\Phi}X} F(d(x,y)).
\end{equation}
An estimate of the form (\ref{lrbest}) follows. Note that since the dynamics is an automorphism, the same bound holds for 
$\| [\tau_{-t}^{\Lambda}(B), A] \| = \| [ \tau_t^{\Lambda}(A), B] \|$, and hence we can use the minimum in 
(\ref{lrbmin}).  
\end{proof}

%
%
%

\section{On the existence of the thermodynamic limit}

It is well-known, see e.g \cite{bratteli1997}, Lieb-Robinson bounds are useful in proving
the existence of the thermodynamic limit of the dynamics for quantum spin systems. The same is true in this setting. The following 
result is from  \cite{nach2010}.
 
\begin{thm}\label{thm:existbd}
Let $\Gamma$ and $F$ be as described in Section~\ref{sec:lrb}. Fix a collection of on-site Hamiltonians $\{ H_x \}_{x \in \Gamma}$
and an interaction $\Phi \in B_F(\Gamma)$. For each $t \in \mathbb{R}$ and $A \in \mathcal{A}_{\Gamma}^{\rm loc}$, 
the norm limit 
\begin{equation}\label{eq:claim} 
\tau_t(A) = \lim_{\Lambda \to \Gamma} \, \tau_t^{\Lambda} (A)
\end{equation} 
exists and the convergence is uniform for $t$ in compact sets. 
The limit may be taken along any increasing sequence of finite sets $\Lambda$ which tend to $\Gamma$, and the result is 
independent of the particular sequence. This limiting dynamics $\tau_t( \cdot)$ can be uniquely extended to a one-parameter group of 
$*$-automorphisms on $\mathcal{A}_{\Gamma}$. 
\end{thm}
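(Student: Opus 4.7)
The plan is a two-step construction. \emph{First}, I would establish that $\tau_t^{\Lambda_n}(A)$ is Cauchy in norm for every $A \in \mathcal{A}_{\Gamma}^{\rm loc}$ along any increasing exhaustion, with a rate depending only on $A$ and $t$ that vanishes as the exhaustion fills $\Gamma$. \emph{Second}, I would extend the resulting map by density and continuity to obtain a one-parameter automorphism group on $\mathcal{A}_\Gamma$.

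For the Cauchy estimate, fix $A \in \mathcal{A}_X$ with $X$ finite, and take finite $X \subset \Lambda_n \subset \Lambda_m \subset \Gamma$. The natural identity is $\tau_t^{\Lambda_m}(A) - \tau_t^{\Lambda_n}(A) = \int_0^t \frac{d}{ds}[\tau_s^{\Lambda_m}(\tau_{t-s}^{\Lambda_n}(A))]\, ds$, but both inner and outer dynamics are generated by unbounded Hamiltonians, so the differentiation requires care. I would introduce $H_0 = H_{\Lambda_n} + \sum_{x \in \Lambda_m \setminus \Lambda_n} H_x$, which is essentially self-adjoint on $\mathcal{D}_{\Lambda_m}$. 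Two facts are crucial: the summands in $H_0 - H_{\Lambda_n}$ act on tensor factors disjoint from $\Lambda_n$, so they commute with every $B \in \mathcal{A}_{\Lambda_n}$, which gives $e^{isH_0}\tau_{t-s}^{\Lambda_n}(A)e^{-isH_0} = \tau_t^{\Lambda_n}(A)$; and the difference $V = H_{\Lambda_m} - H_0 = \sum_{Z \subset \Lambda_m,\, Z \not\subset \Lambda_n} \Phi(Z)$ is bounded, being a finite sum of interaction terms. Hence the unitary $U(s) = e^{isH_{\Lambda_m}}e^{-isH_0}$ satisfies $\tau_s^{\Lambda_m}(\tau_{t-s}^{\Lambda_n}(A)) = U(s)\tau_t^{\Lambda_n}(A)U(s)^*$, and a computation on the core $\mathcal{D}_{\Lambda_m}$ via Stone's theorem and functional calculus yields $\frac{d}{ds}U(s)\psi = i\tau_s^{\Lambda_m}(V)U(s)\psi$. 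Since $s \mapsto \tau_s^{\Lambda_m}(V)$ is strongly continuous and bounded, Proposition~\ref{prop:sols}(iii) extends this to a strong derivative on all of $\mathcal{H}_{\Lambda_m}$, and the product rule for strongly differentiable $\mathcal{B}(\mathcal{H})$-valued maps from Section~\ref{sec:calculus} then yields
\[
\tau_t^{\Lambda_m}(A) - \tau_t^{\Lambda_n}(A) = i\int_0^t \tau_s^{\Lambda_m}\bigl([V,\tau_{t-s}^{\Lambda_n}(A)]\bigr)\, ds.
\]

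Taking norms and applying Theorem~\ref{thm:lrb} to each $\|[\Phi(Z),\tau_{t-s}^{\Lambda_n}(A)]\|$ (using the trivial bound $2\|\Phi(Z)\|\|A\|$ for the few $Z$ that happen to meet $X$), I would then reorganise the sum over $Z \not\subset \Lambda_n$ by selecting a site $y \in Z \setminus \Lambda_n$ and applying the convolution condition (\ref{Fconv}) to absorb the inner sum over $Z$, producing an estimate of the form
\[
\|\tau_t^{\Lambda_m}(A) - \tau_t^{\Lambda_n}(A)\| \leq c(|t|)\,\|A\|\,\|\Phi\| \sum_{x \in X} \sum_{y \in \Gamma \setminus \Lambda_n} F(d(x,y)),
\]
where $c(|t|)$ is bounded on compacta. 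Uniform integrability of $F$ forces the tail on the right to vanish as $\Lambda_n \nearrow \Gamma$, giving the Cauchy property uniformly in $t$ on compact intervals; applying the same estimate with $\Lambda_n$ replaced by the intersection $\Lambda \cap \Lambda'$ of any two finite sets containing $X$ shows independence of the chosen sequence and unambiguously defines $\tau_t$ on $\mathcal{A}_\Gamma^{\rm loc}$.

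For the extension step, $\tau_t$ inherits linearity, isometry, and the $*$-property from $\tau_t^{\Lambda}$ under norm limits, and hence extends uniquely to an isometric $*$-linear map on the completion $\mathcal{A}_\Gamma$. Multiplicativity $\tau_t(AB) = \tau_t(A)\tau_t(B)$ is verified first on local observables (which are eventually supported in a common $\Lambda$) and then extended by continuity, and the group law $\tau_{t+s}=\tau_t\circ\tau_s$ descends from $\tau_{t+s}^\Lambda = \tau_t^\Lambda \circ \tau_s^\Lambda$ by approximating $\tau_s(A)$ by local observables and using isometricity plus uniform convergence. The main obstacle is the differentiation in Step~1: making rigorous sense of the derivative of an expression involving two unbounded generators. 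The key device is the splitting $H_{\Lambda_m} = H_0 + V$, which converts the apparent unboundedness into a bounded perturbation to which Proposition~\ref{prop:sols} applies directly.
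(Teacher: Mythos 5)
Your proposal is correct and reaches the same estimate as the paper, but it takes a mildly different route to the key difference formula. The paper first passes both finite-volume dynamics through an interaction picture that removes \emph{all} on-site terms, i.e.\ $U_{\Lambda}(t,s)=e^{itH_\Lambda^{\rm loc}}e^{-i(t-s)H_\Lambda}e^{-isH_\Lambda^{\rm loc}}$, establishes that the interaction-picture generators $H^{\rm int}_{\Lambda}(s)$ are bounded and strongly continuous, then telescopes between $\tau^{\Lambda_n}_{t,\rm int}$ and $\tau^{\Lambda_m}_{t,\rm int}$ via $\int_0^t \frac{d}{ds}\{U_{\Lambda_n}(0,s)U_{\Lambda_m}(s,t)AU_{\Lambda_m}(t,s)U_{\Lambda_n}(s,0)\}\,ds$, and finally recovers the full dynamics from the interaction picture by conjugating with $e^{it\sum_{x\in X}H_x}$. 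You instead compare $\tau_t^{\Lambda_m}$ and $\tau_t^{\Lambda_n}$ \emph{directly}, by introducing the auxiliary self-adjoint $H_0=H_{\Lambda_n}+\sum_{x\in\Lambda_m\setminus\Lambda_n}H_x$ on the larger space; the two crucial observations --- that $e^{isH_0}$ implements $\tau_s^{\Lambda_n}$ on $\mathcal{A}_{\Lambda_n}$, and that $V=H_{\Lambda_m}-H_0$ is a \emph{bounded} finite sum of interaction terms --- allow you to write the cleaner formula $\tau_t^{\Lambda_m}(A)-\tau_t^{\Lambda_n}(A)=i\int_0^t\tau_s^{\Lambda_m}\bigl([V,\tau_{t-s}^{\Lambda_n}(A)]\bigr)\,ds$, avoiding the detour through the interaction picture. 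Both strategies perform exactly the same function of converting the unbounded difference of generators into a bounded perturbation, and after that point (splitting the sum over $Z$ into terms meeting $X$, bounded trivially, and terms disjoint from $X$, bounded by Theorem~\ref{thm:lrb}, then resumming with the convolution condition) the two arguments coincide. Your version buys a slightly more transparent difference identity at the cost of having to verify the commutation statement $e^{isH_0}\tau_{t-s}^{\Lambda_n}(A)e^{-isH_0}=\tau_t^{\Lambda_n}(A)$ for unbounded commuting generators on a tensor product; the paper's version keeps each manipulation inside a single Hilbert space but needs two interaction pictures and an extra step to return to the Heisenberg dynamics.

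One small caveat on the differentiation step: when you compute $\frac{d}{ds}U(s)\psi$ for $\psi\in\mathcal{D}_{\Lambda_m}$, the vector $e^{-isH_0}\psi$ need not remain in the core $\mathcal{D}_{\Lambda_m}$ --- what Stone's theorem gives you is that it stays in the domain of the closure $\overline{H_0}$, which coincides with $D(\overline{H_{\Lambda_m}})$ precisely because $V$ is bounded. So the computation should be phrased on $D(\overline{H_0})$ rather than on $\mathcal{D}_{\Lambda_m}$; this is exactly the kind of domain bookkeeping the present paper is written to make explicit, and it does not change the conclusion, since $D(\overline{H_0})$ is dense and Proposition~\ref{prop:sols}(iii) then upgrades the derivative to all of $\mathcal{H}_{\Lambda_m}$.
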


\begin{proof} 
Let $\{ \Lambda_n \}_{n \geq 0}$ be any non-decreasing, exhaustive sequence of finite subsets of $\Gamma$.
Let $A \in \mathcal{A}_{\Gamma}^{\rm loc}$ and denote by $X \subset \Gamma$ the finite support of $A$.
For any $T>0$, we will show that the sequence $\{ \tau_t^{\Lambda_n}(A) \}_{n \geq 0}$ is Cauchy in norm, uniformly for $t \in [-T, T]$.
 
It will again be convenient to define an interaction-picture dynamics. 
In this case, for any finite $\Lambda \subset \Gamma$, define a two-parameter family of 
unitaries on $\mathcal{H}_{\Lambda}$ by setting 
\begin{equation} \label{eq:intuni}
 U_{\Lambda} (t,s) = e^{i t H_{\Lambda}^{\text{loc}} } \, e^{-i (t-s) H_{\Lambda}} \, 
 e^{-is H_{\Lambda}^{\text{loc}}} 
\end{equation}
where $H_{\Lambda}$ is as in (\ref{fvham}) and 
\begin{equation}
H_{\Lambda}^{\rm loc} = \sum_{x \in \Lambda} H_x
\end{equation}
is just the strictly local part of $H_{\Lambda}$. The finite volume interaction-picture dynamics in $\Lambda$ is then defined by
\begin{equation} \label{eq:intpic}
\tau^{\Lambda}_{t, \text{int}} (A) = U_{\Lambda} (0,t) \, A \; U_{\Lambda} (t,0),
\quad \mbox{for all } A \in \mathcal{A}_{\Lambda} \, .
\end{equation} 
Arguing as in Step 1 of the proof of Theorem~\ref{thm:lrb}, it is clear that $U_{\Lambda}(t,s)$ is
separately strongly differentiable in $s$ and $t$ with 
\begin{equation}
 \frac{d}{d t} \, U_{\Lambda} (t,s) =  -i H_{\Lambda}^{\rm int} (t) \, U_{\Lambda} (t,s) 
\quad \mbox{and} \quad
\frac{d}{d s} \, U_{\Lambda} (t,s) =  i U_{\Lambda} (t,s) \, H_{\Lambda}^{\rm int} (s) 
\end{equation}
where we have set
\begin{equation} \label{eq:gen}
H^{\rm int}_{\Lambda} (t) =  \sum_{Z \subset \Lambda} e^{it H_{\Lambda}^{\rm loc} } \, 
\Phi (Z)  \, e^{-it H^{\rm loc}_{\Lambda} } .
\end{equation}

Our first goal is to show that the sequence $\{ \tau_{t, \text{int}}^{\Lambda_n}(A) \}_{n \geq 0}$ is Cauchy in norm, uniformly for $t \in [-T,T]$. 
In fact, we will prove the bound
\begin{equation} \label{norm-bd}
\left\| \tau_{t,\text{int}}^{\Lambda_n} (A) - \tau_{t,\text{int}}^{\Lambda_m} (A) \right\| \leq  2 T(1+ e^{2C \| \Phi \| T}) \| A \| \| \Phi \| \sum_{x \in X} \sum_{y \in \Lambda_n \setminus \Lambda_m} F(d(x,y)) ,
\end{equation}
which is valid for any $A \in \mathcal{A}_X$, $t \in [-T, T]$, $m, n$ large, and  with $X \subset \Lambda_m \subset \Lambda_n$. 
Since $|X|$ is finite and $F$ is uniformly integrable, this bound clearly goes to zero (uniformly for $t \in [-T, T]$) as
$m,n \to \infty$. Moreover, using that 
\begin{equation*}
\tau_t^{\Lambda} (A) = \tau_{t,\text{int}}^{\Lambda} \left(e^{itH_{\Lambda}^{\text{loc}}} \, A \,  
e^{-it H_{\Lambda}^{\text{loc}}} \right) = 
\tau_{t,\text{int}}^{\Lambda} \left(e^{it \sum_{x \in X}H_x} \, A \, e^{-i t \sum_{x \in X} H_x} \right) 
\end{equation*}
one immediately sees that the sequence $\{ \tau_t^{\Lambda_n}(A) \}$ is Cauchy in norm, uniformly for $t \in [-T,T]$. Here we argue as in
Step 3 of Theorem~\ref{thm:lrb}. 

To prove (\ref{norm-bd}), let $A$ and $t$ be as above and take $m \leq n$ large with $X \subset \Lambda_m \subset \Lambda_n$. Observe that
\begin{equation} \label{eq:diff}
\tau_{t,\text{int}}^{\Lambda_n} (A) - \tau_{t,\text{int}}^{\Lambda_m} (A) = \int_0^t \frac{d}{ds} 
\left\{ U_{\Lambda_n} (0,s) \, U_{\Lambda_m} (s,t) \, A \, U_{\Lambda_m} (t,s) \, 
U_{\Lambda_n} (s,0) \right\} \, ds \, .
\end{equation}

A short calculation shows that
\begin{equation}
\begin{split}
\frac{d}{ds} U_{\Lambda_n} (0,s) & \, U_{\Lambda_m} (s,t) \, A \, U_{\Lambda_m} (t,s) \, 
U_{\Lambda_n} (s,0) \\
& = \, i U_{\Lambda_n}(0,s) \left[ \left( H^{\text{int}}_{\Lambda_n}(s) - 
H^{\text{int}}_{\Lambda_m}(s) \right), U_{\Lambda_m}(s,t) \, A \, 
U_{\Lambda_m}(t,s) \right] U_{\Lambda_n}(s,0) \\
& = \, i \sum_{\stackrel{Z \subset \Lambda_n:}{ Z \cap (\Lambda_n \setminus \Lambda_m) \neq 
\emptyset}} U_{\Lambda_n}(0,s) e^{is H_{\Lambda_m}^{\text{loc}}} \left[ \Phi(Z,s), 
\tau_{t-s}^{\Lambda_m} \left( \tilde{A}(t) \right) \right] e^{-is H_{\Lambda_m}^{\text{loc}}} 
U_{\Lambda_n}(s,0) \, ,
\end{split}
\end{equation}
where we have set
\begin{equation} \label{eq:tat}
\tilde{A}(t) = e^{-it H_{\Lambda_m}^{\text{loc}}} A \, e^{it H_{\Lambda_m}^{\text{loc}}} \quad \mbox{and} \quad \Phi(Z,s) = e^{is H_{\Lambda_n \setminus \Lambda_m}^{\text{loc}}} \Phi(Z)  e^{-is H_{\Lambda_n 
\setminus \Lambda_m}^{\text{loc}}} 
\end{equation}
and used the fact that 
\begin{eqnarray} \label{eq:tbs}
e^{-is H_{\Lambda_m}^{\text{loc}}}\left( H^{\text{int}}_{\Lambda_n}(s) - 
H^{\text{int}}_{\Lambda_m}(s) \right) e^{is H_{\Lambda_m}^{\text{loc}}} 
& = & \sum_{Z \subset \Lambda_n} e^{is H_{\Lambda_n \setminus \Lambda_m}^{\text{loc}}} 
\Phi(Z)  e^{-is H_{\Lambda_n \setminus \Lambda_m}^{\text{loc}}} - \sum_{Z \subset \Lambda_m} 
\Phi(Z) \nonumber \\
& = & \sum_{\stackrel{Z \subset \Lambda_n:}{ Z \cap (\Lambda_n \setminus \Lambda_m) \neq 
\emptyset}} \Phi(Z,s).
\end{eqnarray}
As a consequence, we have the norm bound
\begin{equation} \label{base-est}
\left\| \tau_{t,\text{int}}^{\Lambda_n} (A) - \tau_{t,\text{int}}^{\Lambda_m} (A)  \right\| \leq \sum_{\stackrel{Z \subset \Lambda_n:}{ Z \cap (\Lambda_n \setminus \Lambda_m) \neq 
\emptyset}} \int_{t_-}^{t_+} 
\left\| \left[ \tau_{t-s}^{\Lambda_m} \left( \tilde{A}(t) \right), \Phi(Z,s)  \right] \right\| \, ds =  \Sigma_1 + \Sigma_2
\end{equation}
where the terms in  the sum on $Z$ have been separated into two groups; those in $\Sigma_1$ contain all the 
non-trivial interaction terms with support $Z$ satisfying $Z \cap (\Lambda_n \setminus \Lambda_m) \neq \emptyset$ and
$Z \cap X \neq \emptyset$, while the rest are contained in $\Sigma_2$.

A simple, over-counting argument shows that
\begin{eqnarray} \label{sig1bd}
\Sigma_1 & \leq & \sum_{x \in X} \sum_{y \in \Lambda_n \setminus \Lambda_m} \sum_{\stackrel{Z \subset \Lambda_n:}{ x,y \in Z}} \int_{t_-}^{t_+} 
2 \| \tilde{A}(t) \| \| \Phi(Z,s) \| \, ds \nonumber \\
& \leq & 2T \| A \| \| \Phi \| \sum_{x \in X}  \sum_{y \in \Lambda_n \setminus \Lambda_m} F(d(x,y)).
\end{eqnarray}

To estimate $\Sigma_2$, we first use the Lieb-Robinson bound, i.e. (\ref{lrbest}). Observe that the supports of the
observables being considered do not expand, i.e.,
\begin{equation} 
{\rm supp}(\tilde{A}(t)) \subset X \quad \mbox{and similarly} \quad {\rm supp}(\Phi(Z,s)) \subset Z 
\end{equation}
and moreover, for these $Z$, $Z \cap X = \emptyset$. In this case, (\ref{lrbest}) provides a norm bound on the bilinear mapping 
$G_t: \mathcal{A}_X \times \mathcal{A}_Z \to \mathcal{A}_{\Lambda}$ with $G_t(A,B) = [ \tau_t^{\Lambda}(A), B]$. 
We conclude that 
\begin{eqnarray}
\left\| \left[ \tau_{t-s}^{\Lambda_m} \left( \tilde{A}(t) \right), \Phi(Z,s)  \right] \right\| & \leq & \frac{2 \| \tilde{A}(t) \| \| \Phi(Z,s) \|}{C} \left( e^{2 C \| \Phi \| |t-s|} -1 \right) \sum_{x \in X} \sum_{z \in Z} F(d(x,z)) \nonumber \\ 
& \leq & \frac{2 \| A \| \| \Phi(Z) \|}{C} e^{2C \| \Phi \| T} \sum_{x \in X} \sum_{z \in Z} F(d(x,z)).
\end{eqnarray}

Another over-counting argument leads one to
\begin{eqnarray} \label{sig2bd}
\Sigma_2 & \leq & \sum_{y \in \Lambda_n \setminus \Lambda_m}  \sum_{\stackrel{Z \subset \Lambda_n:}{y \in Z}}  \frac{2 \| A \| \| \Phi(Z) \|}{C} e^{2C \| \Phi \| T} \sum_{x \in X} \sum_{z \in Z} F(d(x,z)) |t| \nonumber \\
& \leq & \frac{2 T \| A \|}{C} e^{2C \| \Phi \| T} \sum_{x \in X} \sum_{y \in \Lambda_n \setminus \Lambda_m} \sum_{z \in \Lambda_n} F(d(x,z)) \sum_{\stackrel{Z \subset \Lambda_n:}{y, z \in Z}} \| \Phi(Z) \| \nonumber \\
&  \leq & \frac{2 T \| A \|  \| \Phi \|}{C} e^{2C \| \Phi \| T} \sum_{x \in X} \sum_{y \in \Lambda_n \setminus \Lambda_m} \sum_{z \in \Lambda_n} F(d(x,z)) F(d(z,y)) \nonumber \\
&  \leq & 2 T \| A \|  \| \Phi \| e^{2C \| \Phi \| T} \sum_{x \in X} \sum_{y \in \Lambda_n \setminus \Lambda_m} F(d(x,y)).
\end{eqnarray}

Combining (\ref{sig1bd}) and (\ref{sig2bd}), we find (\ref{norm-bd}) as claimed. The proof of the remaining facts in the statement of this theorem is standard (see, e.g., \cite{Simon1993}).
\end{proof}

\section*{acknowledgement}

We thank Hendrik Grundling for pointing out the errors in our earlier work, and for giving us a chance to address them. This acknowledgement does not imply an endorsement by him of the arguments provided here. The authors benefited from participating in the program on {\em Mathematical Horizons for Quantum Physics 2} at Institute for Mathematical Sciences of National University of Singapore, September 2013. Part of the work reported here was carried out during our visit supported by the Institute. RS 
would also like to acknowledge the hospitality of UC Davis where this work was completed during his sabbatical in academic year 2014-2015.

 \end{document}